\theoremstyle{definition}
\newtheorem{thm}{Theorem}
\newtheorem{lem}{Lemma}
\newtheorem{cor}{Corollary}
\newcommand{\complex}{\mathbb{C}}
\newcommand{\Hil}{\mathcal{H}} 
\newcommand{\Bnd}[1]{\mathcal{B} (#1)}
\newcommand{\Trone}[1]{\mathcal{T}_{1} (#1)}
\newcommand{\St}[1]{\mathcal{S} (#1)} 
\newcommand{\Eff}[1]{\mathcal{E} (#1)} 
\newcommand{\id}{\mathds{1}}
\newcommand{\tr}[1]{\operatorname{tr}\left[#1\right]} 
\newcommand{\ptr}[2]{\operatorname{tr}_{#1}\left[#2\right]} 
\newcommand{\obM}{\mathsf{M}}
\newcommand{\obA}{\mathsf{A}}
\begin{document}

\begin{frontmatter}
\title{Separability Criterion for Quantum Effects}

\author{Ikko Hamamura}
\ead{hamamura@nucleng.kyoto-u.ac.jp}
\address{Department of Nuclear Engineering, Kyoto University, 6158540 Kyoto, Japan}

\begin{abstract}
  Entanglement of quantum states is absolutely essential for modern quantum sciences and technologies.
  It is natural to extend the notion of entanglement to quantum observables dual to quantum states.
  For quantum states, various separability criteria have been proposed to determine whether
  a given state is entangled.
  In this Letter, we propose a separability criterion for specific quantum effects (binary
  observables) that can be regarded as a dual version of the Bell--Clauser--Horne--Shimony--Holt (Bell--CHSH) inequality
  for quantum states.
  The violation of the dual version of the Bell--CHSH inequality is confirmed by using IBM's cloud quantum computer.
  As a consequence, the violation of our inequality rules out the maximal tensor product state space,
  that satisfies information causality and local tomography.
  As an application, we show that an entangled observable which violates our inequality is useful for quantum teleportation.
\end{abstract}
\begin{keyword}
  Bell--CHSH inequality \sep Entanglement of observables \sep IBM Quantum Experience
\end{keyword}
\end{frontmatter}

\section{Introduction}

According to the axioms of quantum theory introduced by von Neumann~\cite{Neumann1932},
a quantum system is associated with a separable Hilbert space
and a composite quantum system is given by a tensor product of Hilbert spaces.
In recent years, research to derive the axiom of Hilbert space from the physical or informational principle has gained significant momentum~\cite{Chiribella2011,Dakic2011,Hardy2016,Masanes2011,Wilce2012}.
Once the axiom of the local Hilbert space is derived, observable algebra (or positive operator valued measures as observables) and Born’s probability rule can immediately be derived.
However, the composite system cannot be determined uniquely merely from the axiom of the local Hilbert space, so its physical justification is desired.
One of the keys to characterize the composite system is the entanglement of observables
because it is not compatible with the maximal tensor product state space
while it satisfies information causality~\cite{Pawlowski2009} and local tomography~\cite{Barrett2007}.

In this work, we develop a separability criterion for a certain class of observables. 
For quantum states, several separability criteria are already known, such as the positive partial transpose criterion~\cite{Peres1996, Horodecki1996}, range criterion~\cite{Horodecki1997}, reduction criterion~\cite{Horodecki1999,Cerf1999},
and entanglement witness~\cite{Horodecki1996} represented by the Bell--Clauser--Horne--Shimony--Holt (Bell--CHSH) witness~\cite{Clauser1969}.
Conversely, only a few attempts have been done for observables~\cite{Vertesi2011, Rabelo2011}.
Our main purpose is to establish the dual version of the Bell--CHSH inequality
so that one can detect entanglement of observables from violations of the inequality.
Such a violation is experimentally confirmed by using IBM's cloud quantum computer.
As an application, we show that observables that violate our inequality are useful for quantum teleportation~\cite{Bennett1993}.

The outline of this Letter is as follows:
We begin by introducing the three types of positivities and fixing notations in Sec.~\ref{sec:threepos}.
Using these positivities, we discuss the state spaces of the composite systems.
In this Letter, we assume that a local quantum system is given by a finite-dimensional Hilbert space $\Hil=\complex^d$.
In Sec.~\ref{sec:nosigloctom}, we recall that some physical principles for the composite systems determine a family of possible state spaces
that contains the minimal, maximal, and ``physical'' tensor products associated with the tensor product of Hilbert spaces $\complex^{d_A d_B}\cong\complex^{d_A}\otimes \complex^{d_B}$.
While the violation of the Bell--CHSH inequality rules out the minimal one,
it does not rule out the maximal one.
To overcome this problem, we propose in Sec.~\ref{sec:dualbell} a dual version of the Bell--CHSH inequality that can be used to exclude the maximal one.
This inequality can be violated by an \textit{entangled effect} whose definition is explained below.
In Sec.~\ref{sec:experiment}, we show experimentally by using a quantum computer (IBM Quantum Experience) that an entangled effect violating this new inequality exists.
Because the maximal composite system does not allow any entangled effect to exist,
we exclude the maximal composite system from the possible candidates.
In Sec.~\ref{sec:qtelepo}, we show that this violation is useful for quantum teleportation.
Finally, some concluding comments are given in Sec.~\ref{sec:conclusion}.

\section{Three types of positivities}\label{sec:threepos}
We introduce three positivities that are convenient for discussing the composite systems.
An operator $X$ on a Hilbert space $\Hil$ is \textit{positive},
denoted by $X \geq 0$, if $\Braket{\psi|X \psi}\geq 0$ holds for all $\ket{\psi} \in \Hil$.
A bipartite operator $X$ on a Hilbert space $\Hil_A\otimes\Hil_B$ is \textit{positive on pure tensors (POPT)}~\cite{Barnum2005},
denoted by $X \geq_{\mathrm{POPT}} 0$,
if $\Braket{\psi \otimes \phi| X \psi \otimes \phi} \geq 0$ holds for all $\ket{\psi}\in \Hil_A$ and $\ket{\phi} \in \Hil_B$.
POPT is also called block positive~\cite{Labuschagne2006,Szarek2008}.
A non-positive POPT operator is called an ``entanglement witness''~\cite{Horodecki1996}.
A bipartite operator $X$ on a Hilbert space $\Hil_A\otimes\Hil_B$ is \textit{separable positive} and is denoted by $X \geq_{\mathrm{SP}} 0$ if $X$ has a decomposition $X = \sum_i A_i \otimes B_i$ such that each $A_i$ and $B_i$ is a positive operator on $\Hil_A$ and $\Hil_B$ respectively.
Since $ \sum_i A_i \otimes B_i$ is a positive operator for positive operators $A_i$ and $B_i$,
$X \geq_{\mathrm{SP}} 0$ implies $X \geq 0$.
Since the positive operator is positive on pure tensors,
$X \geq 0$ implies $X \geq_{\mathrm{POPT}}0$.
However, the converses do not hold.

On the set of operators on a finite-dimensional Hilbert space,
we introduce the Hilbert--Schmidt inner product as $\Braket{A|B}_{\mathrm{HS}}=\tr{A^* B}$.
The set of all positive operators is dual to itself via this inner product and
the set of all POPT operators and the set of all separable positive operators are dual to each other.

\section{No-signaling principle and local tomography}\label{sec:nosigloctom}
We assume that a local quantum system is associated with a finite-dimensional Hilbert space $\Hil = \complex^d$.
If an operator $\rho$ is positive and its trace is equal to unity,
the operator $\rho$ is called a state, which represents an experimental situation.
A state space $\St{\Hil}$ is the set of all states $\Set{\rho\in \Trone{\Hil}|\rho\geq 0}$,
where $\Trone{\Hil}$ is the set of trace class operators with unit trace.
An effect space is a dual space to the state space $\St{\Hil}$
and is defined as $\Eff{\Hil}=\Set{X\in\Bnd{\Hil}|0\leq X \leq \id}$.
An effect represents a measurement apparatus that outputs ``yes'' or ``no''.
Therefore an effect can be identified with a binary observable.
When the state $\rho$ is prepared,
the occurrence probability of a measurement event represented by $E \in \Eff{\Hil}$ is given by the trace formula $\tr{\rho E}$.
This trace formula is called the generalized Born rule.
From the no-signaling principle and local tomography (the latter was introduced as the global state assumption in Ref.~\cite{Barrett2007}),
the state space of the two quantum systems is bounded~\cite{Barnum2005,Barrett2007,Barnum2012,Janotta2014} by the minimal tensor product state space and the maximal tensor product state space, which are defined as follows:
A minimal tensor product space is
\begin{equation}
  \St{\complex^{d_A}}\otimes_{\min} \St{\complex^{d_B}}=\Set{\rho
     \in \Trone{\complex^{d_A d_B}}
    | \rho\geq_{\mathrm{SP}} 0},
\end{equation}
and
a maximal tensor product space is
\begin{equation}
  \St{\complex^{d_A}}\otimes_{\max} \St{\complex^{d_B}}=\Set{\rho
   \in \Trone{\complex^{d_A d_B}}
   | \rho\geq_{\mathrm{POPT}} 0}.
\end{equation}
Note that the minimal tensor product space corresponds to the set of separable states.
A state that is not a separable state is called an entangled state.
An element of the maximal tensor product state space is called a POPT state~\cite{Barnum2005,Barnum2010}.
The state space given by the axiom of tensor product $\St{\complex^{d_A}\otimes \complex^{d_B}}$ coincides with neither of them.
That is, $\St{\complex^{d_A}}\otimes_{\min} \St{\complex^{d_B}} \subsetneq \St{\complex^{d_A}\otimes \complex^{d_B}} \subsetneq \St{\complex^{d_A}}\otimes_{\max} \St{\complex^{d_B}}$ holds.
Note also that this problem does not occur in classical theory.
The tensor product of classical state spaces is unique~\cite{Namioka1969}.

The problem here is how we can distinguish $\St{\complex^{d_A}\otimes \complex^{d_B}}$ experimentally from other state spaces.
The left inequality is attained by the Bell's argument~\cite{Bell1964}.
Let $B$ be
\begin{equation}
  B = \tr{\rho (A_0B_0 + A_0 B_1+ A_1 B_0 - A_1 B_1)},
\end{equation}
then the Bell--CHSH inequality~\cite{Clauser1969} means $|B| \leq 2$ for any separable state $\rho$ and operators $A_i$ and $B_i$ satisfying $A_i^2=B_i^2=\id$ and $[A_i, B_j]=0$.
Conversely, there exists an entangled state $\rho$ and
pairs of incompatible measurements $\Set{A_0, A_1}$ and $\Set{B_0, B_1}$
that lead to the violation of the Bell--CHSH inequality~\cite{Wolf2009}.
The violation of the Bell--CHSH inequality thus implies the existence of an entangled state;
namely, the state space of the composite system is strictly larger than $\St{\complex^{d_A}}\otimes_{\min}\St{\complex^{d_B}}$.

\section{Dual Bell--CHSH inequality}\label{sec:dualbell}
It is not a straightforward task to distinguish $\St{\complex^{d_A}\otimes\complex^{d_B}}$ and $\St{\complex^{d_A}}\otimes_{\max} \St{\complex^{d_B}}$
because there exists a quantum mechanical representation for POPT states~\cite{Barnum2010}.
The maximum value of the left-hand side (LHS) of the Bell--CHSH inequality in the maximal tensor product state space is $2\sqrt{2}$, which coincides with that in $\St{\complex^{d_A}\otimes\complex^{d_B}}$.
Furthermore, the information causality~\cite{Pawlowski2009} is not strong enough to discard the maximal tensor product state space $\St{\complex^{d_A}}\otimes_{\max}\St{\complex^{d_B}}$.
To overcome this problem, we use the duality between a state space and an effect space.
We distinguish $\St{\complex^{d_A}\otimes\complex^{d_B}}$ and $\St{\complex^{d_A}}\otimes_{\max}\St{\complex^{d_B}}$ by considering their effect spaces.
The effect space dual to $\St{\complex^{d_A}}\otimes_{\max}\St{\complex^{d_B}}$ is the set of \textit{separable effects} $\Eff{\complex^{d_A}}\otimes_{\min}\Eff{\complex^{d_B}} = \Set{X \in \Bnd{\complex^{d_A}\otimes \complex^{d_B}}|0\leq_{\mathrm{SP}}X\leq_{\mathrm{SP}}\id}$.
An effect that is not a separable effect is called \textit{an entangled effect}.
We call observables that correspond to separable effect valued measures \textit{separable observables} and observables that are not separable \textit{entangled}.
  In the case of a finite outcome, these definitions are consistent with entangled measurements in Ref.~\cite{Vertesi2011}.

Let us consider a binary measurement that has two outcomes $+1$ and $-1$.
Let $\obM_1$, $\obM_{-1}\in \Eff{\complex^d}$ be POVM elements with output $+1$ and $-1$ respectively.
Let us write the Hermitian operator representing the expectation operator as $\obM = \obM_1 - \obM_{-1}= 2 \obM_1 - \id$.
We can safely identify this operator M with a binary observable $\Set{\obM_1, \obM_{-1}}$.
Let $\alpha_d$ be a constant $d/(d-1)$.
We define \textit{the difference from ignorance} by $E(\rho, \obM)\coloneqq \alpha_d \tr{(\rho - \id/d)\obM}/2 = \alpha_d\tr{(\rho-\id/d)\obM_1}$.
\begin{lem}
  \label{lem:ineq-dual}
  Let $X$ be an operator satisfying $0 \leq X \leq \id$ and $\rho$ be a state on
  a finite-dimensional Hilbert space $\complex^d$.
  The following inequality holds:
  \begin{equation*}
    -1 \leq \alpha_d\tr{\left(\rho-\frac{\id}{d}\right) X}\leq 1.
  \end{equation*}
\end{lem}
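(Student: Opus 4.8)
The plan is to reduce this operator inequality to an elementary bound on the eigenvalues of $X$. First I would split the functional as
\begin{equation*}
  \alpha_d\tr{\left(\rho-\frac{\id}{d}\right)X}
  = \alpha_d\left(\tr{\rho X}-\frac{1}{d}\tr{X}\right),
\end{equation*}
so that the two terms can be estimated separately: $\tr{\rho X}$ through the extreme eigenvalues of $X$, and $\tr{X}/d$ as the arithmetic mean of those eigenvalues. Since $0\leq X\leq\id$, every eigenvalue $x_i$ of $X$ lies in $[0,1]$; I order them as $x_1\geq\cdots\geq x_d$.

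The key observation is that $\tr{\rho X}$ never leaves the spectral interval $[x_d,x_1]$. Working in an orthonormal eigenbasis $\{\ket{e_i}\}$ of $X$, the numbers $p_i=\Braket{e_i|\rho\, e_i}$ are nonnegative (because $\rho\geq 0$) and sum to $\tr{\rho}=1$, and $\tr{\rho X}=\sum_i p_i x_i$ is therefore a convex combination of the $x_i$. Hence $x_d\leq\tr{\rho X}\leq x_1$.

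For the upper bound I would then write
\begin{equation*}
  \tr{\rho X}-\frac{1}{d}\tr{X}
  \leq x_1-\frac{1}{d}\sum_{i=1}^{d}x_i
  = \frac{d-1}{d}\,x_1-\frac{1}{d}\sum_{i=2}^{d}x_i
  \leq \frac{d-1}{d},
\end{equation*}
using $x_1\leq 1$ and $x_i\geq 0$ in the last step; multiplying by $\alpha_d=d/(d-1)$ gives the right-hand inequality. The lower bound follows symmetrically by replacing $x_1$ with $x_d$, which yields $\tr{\rho X}-\tr{X}/d\geq x_d-\frac{1}{d}\sum_i x_i\geq -\frac{d-1}{d}$, i.e.\ $\alpha_d\tr{(\rho-\id/d)X}\geq -1$.

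I do not expect a genuine obstacle here; the only step deserving care is the justification that $\tr{\rho X}$ stays within $[x_d,x_1]$, which rests entirely on $\rho$ being a positive, unit-trace operator. It is also worth noting that both bounds are tight, being saturated when $X$ is a rank-one projection (respectively a projection of rank $d-1$) and $\rho$ is the corresponding eigenprojection, so the constant $\alpha_d$ cannot be improved.
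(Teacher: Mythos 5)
Your proof is correct, but it is the mirror image of the paper's argument rather than the same one. The paper diagonalizes $\rho$ (writing $\rho = U\rho_D U^\dagger$) and then uses the hypothesis $0\leq X\leq \id$ only through the fact that the diagonal entries of $X'=U^\dagger X U$ lie in $[0,1]$; the bound $\left|\tr{(\rho_D-\id/d)X'}\right|\leq \sum_{i:\,\rho_{ii}>1/d}(\rho_{ii}-1/d)\leq 1-1/d$ then follows from the normalization $\sum_i\rho_{ii}=1$. You instead diagonalize $X$ and use the hypotheses $\rho\geq 0$ and $\tr{\rho}=1$ only through the fact that the diagonal entries $p_i=\Braket{e_i|\rho\,e_i}$ form a probability vector, so that $\tr{\rho X}$ is a convex combination of the eigenvalues $x_i\in[0,1]$ and hence lies in $[x_d,x_1]$; the constant $1-1/d$ then comes from comparing an extreme eigenvalue with the mean $\tr{X}/d$. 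The two arguments are of essentially equal length and difficulty, and each isolates one of the two structural inputs (trace-one positivity of $\rho$ versus the operator interval $0\leq X\leq\id$) as the source of the constant. Your version has the small additional merit of exhibiting explicit saturating pairs --- a rank-one projection $X$ with $\rho$ its eigenstate for the upper bound, and a rank-$(d-1)$ projection $X$ with $\rho$ the normalized projection onto its kernel for the lower bound --- which confirms that $\alpha_d$ cannot be improved, whereas the paper only remarks that equality requires $\rho$ to be pure. All steps in your argument check out.
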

\begin{proof}
  Since $\rho$ is self-adjoint,
  $\rho$ is diagonalizable to $\rho = U \rho_D U^\dagger$ such that $\rho_D$ is a diagonal matrix and $U$ is a unitary matrix.
  Therefore,
  \begin{align}
    & \left| \alpha_d\tr{\left(\rho-\frac{\id}{d}\right) X} \right|\nonumber\\
    &= \left| \alpha_d\tr{
      \begin{pmatrix}
        \rho_{11}-\frac{1}{d} & & \\
        & \ddots & \\
        & & \rho_{dd}-\frac{1}{d}
      \end{pmatrix}
      X'} \right|,
  \end{align}
  where $X'=U^\dagger X U$.
  Since $X'$ is also an effect,
  \begin{align}
    &= \left| \alpha_d\tr{
      \begin{pmatrix}
        \rho_{11}-\frac{1}{d} & & \\
        & \ddots & \\
        & & \rho_{dd}-\frac{1}{d}
      \end{pmatrix}
      X'} \right|,\nonumber\\ 
    &\leq \alpha_d\sum_{\substack{i\in \set{1,\dots,d} \\ \rho_{ii}-1/d>0}} \left( \rho_{ii}-\frac{1}{d}\right) \nonumber\\
    &\leq \alpha_d\left(1-\frac{1}{d}\right)=1
  \end{align}
  holds.
  The last inequality follows from $\sum_i \rho_{ii}=1$ and the fact that the LHS is greater if the number of the term $-1/d$ is smaller.
  The last equality holds when state $\rho$ is a pure state.
\end{proof}

The difference from ignorance for a pair of quantum systems is defined as
\begin{equation*}
  E(\rho^A, \rho^B, \obM) = \frac{\alpha_{d_A}\alpha_{d_B}}{2} \tr{\left(\rho^A-\frac{\id}{d_A}\right) \otimes \left(\rho^B-\frac{\id}{d_B}\right)\obM}.
\end{equation*}
Now we define $ D = E(\rho^A_0, \rho^B_0, \obM) + E(\rho^A_0, \rho^B_1, \obM) + E(\rho^A_1, \rho^B_0, \obM) - E(\rho^A_1, \rho^B_1, \obM)$.
Hereinafter, we consider a limited class of effects.
We assume that an observable $\obM$ satisfies either $\tr{\obM_1}\leq 1$ or $\tr{\obM_{-1}}\leq 1$.
The condition $\tr{\obM_1}\leq 1$ is satisfied if and only if $\obM_1$ is written as
$\obM_1 = \sum \lambda_i E_i$ with a family of rank $1$ effects $\Set{E_i}$ and $0\leq \lambda_i \leq 1$ satisfying $\sum_i \lambda_{i} \leq 1$.
This assumption can be removed by a postprocessing (or coarse-graining).
The postprocessing procedure is as follows:
Without loss of generality, one can assume $\tr{\obM_1}>1$. 
If one obtains the outcome $+1$, one leaves it with probability $p=\frac{1}{\tr{\obM_1}}$ and flips the outcome to $-1$ with probability $1-p$.
If one obtains the outcome $-1$, one leaves it.
The postprocessed effect is $\obM'_1 = \frac{\obM_1}{\tr{\obM_1}}$, which satisfies the condition.
Since the set of separable effects is convex, this postprocessing preserves separability of effects.
Thus, one can renormalize the effect so that it satisfies the assumption without breaking the separability of effects.

The following theorem is our main result:
\begin{thm}[Dual version of Bell-CHSH inequality]
  \label{thm:gen-dual-bell-chsh}
  We assume that a bipartite binary observable $\obM$ with two outcomes $+1$ and $-1$ satisfies either $\tr{\obM_1}\leq 1$ or $\tr{\obM_{-1}}\leq 1$.
	If this observable $\obM$ is separable, then \textit{the dual version of the Bell--CHSH inequality}
  \begin{equation}
    \label{eq:dual-bell}
    -2 \leq D \leq 2
  \end{equation}
	is satisfied.
\end{thm}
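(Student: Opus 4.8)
The plan is to reduce the bipartite expression $D$ to a single trace against the effect $\obM_1$, exploit separability together with the trace condition to expand $\obM_1$ into rank-one product effects, and finally reduce to the elementary (classical) CHSH inequality applied term by term.

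First I would substitute $\obM = 2\obM_1 - \id$ and use bilinearity of $E$ in its two state slots. Writing $\tilde\rho^A_i = \rho^A_i - \id/d_A$ and $\tilde\rho^B_j = \rho^B_j - \id/d_B$, each $\tilde\rho$ is traceless, so the $\id$ term in $\obM$ contributes $\tr{\tilde\rho^A_i \otimes \tilde\rho^B_j} = \tr{\tilde\rho^A_i}\tr{\tilde\rho^B_j} = 0$ and drops out. This collapses the definition to $E(\rho^A_i,\rho^B_j,\obM) = \alpha_{d_A}\alpha_{d_B}\tr{(\tilde\rho^A_i \otimes \tilde\rho^B_j)\obM_1}$, and collecting the four terms of $D$ gives $D = \alpha_{d_A}\alpha_{d_B}\tr{R\,\obM_1}$ with
\begin{equation*}
  R = \tilde\rho^A_0 \otimes (\tilde\rho^B_0 + \tilde\rho^B_1) + \tilde\rho^A_1 \otimes (\tilde\rho^B_0 - \tilde\rho^B_1).
\end{equation*}

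The key step, and the one I expect to be the main obstacle, is to turn the two hypotheses (separability of $\obM$ and $\tr{\obM_1}\leq 1$) into a usable decomposition. Since $\obM_1 \geq_{\mathrm{SP}} 0$, it is a sum of tensor products of positive operators; diagonalizing each factor and absorbing eigenvalues, I would rewrite $\obM_1 = \sum_j \lambda_j\, P_j \otimes Q_j$ where $P_j$ and $Q_j$ are rank-one projections (hence effects, $0\leq P_j,Q_j\leq\id$) and $\lambda_j \geq 0$. Taking the trace and using $\tr{P_j\otimes Q_j}=1$ gives $\sum_j \lambda_j = \tr{\obM_1} \leq 1$, which is exactly where the trace assumption is consumed; without it the total weight could be as large as $d_A d_B$ and the final bound would fail. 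This is the conceptual crux, since it is what lets me treat each product term as a genuine pair of effects while keeping the weights summable.

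Finally I would insert this decomposition into $D = \alpha_{d_A}\alpha_{d_B}\sum_j \lambda_j \tr{R(P_j\otimes Q_j)}$ and set $p_j = \alpha_{d_A}\tr{\tilde\rho^A_0 P_j}$, $q_j = \alpha_{d_A}\tr{\tilde\rho^A_1 P_j}$, $r_j = \alpha_{d_B}\tr{\tilde\rho^B_0 Q_j}$, and $s_j = \alpha_{d_B}\tr{\tilde\rho^B_1 Q_j}$. Because each $P_j,Q_j$ is an effect and each $\rho$ a state, Lemma~\ref{lem:ineq-dual} gives $|p_j|,|q_j|,|r_j|,|s_j|\leq 1$, and each summand becomes the classical CHSH form $p_j(r_j+s_j)+q_j(r_j-s_j)$, bounded in absolute value by $|r_j+s_j|+|r_j-s_j| = 2\max(|r_j|,|s_j|)\leq 2$. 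Summing against the weights then yields $|D|\leq 2\sum_j\lambda_j\leq 2$. The remaining case $\tr{\obM_{-1}}\leq 1$ follows by the same computation applied to $\obM = -(2\obM_{-1}-\id)$, which only flips the overall sign of $D$ and leaves the bound unchanged.
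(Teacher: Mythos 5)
Your proposal is correct and follows essentially the same route as the paper: decompose the separable effect $\obM_1$ into rank-one product projections with total weight $\sum_j\lambda_j=\tr{\obM_1}\leq 1$, use Lemma~\ref{lem:ineq-dual} to bound the resulting scalars by $1$ in absolute value, and apply the classical CHSH estimate term by term before summing against the weights. Your explicit observation that the traceless factors kill the $\id$ part of $\obM=2\obM_1-\id$ is only a slightly more careful spelling-out of the linearity step the paper invokes implicitly.
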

\begin{proof}
  We assume $\tr{\obM_1}\leq 1$ (in the other case, $\tr{\obM_{-1}}\leq 1$ similarly applies.).
  Because $\obM$ is separable, an effect $\obM_1=\sum_i \lambda_i P_A^i\otimes P_B^i$ can be decomposed by using $\lambda_i \geq 0$.
  We can choose $P_A^i$ and $P_B^i$ as rank 1 projections.
  Thus
  \begin{align}
    1 &\geq \tr{\obM_1} \nonumber \\
      &= \tr{\sum_i \lambda_i P_A^i\otimes P_B^i} \nonumber \\
      &= \sum_i \lambda_i
  \end{align}
  holds.
  Therefore, $\obM_1$ can be decomposed as $\obM_1=\sum_i \lambda_i P_A^i\otimes P_B^i $ with $\lambda_i\geq 0$ and $\sum_i \lambda_i \leq 1$.

  For a product effect $P_A\otimes P_B$, it holds that
  \begin{align}
    E &= \frac{\alpha_{d_A}\alpha_{d_B}}{2}\nonumber\\&\quad\tr{\left(\rho_i^A-\frac{\id}{d_A}\right)\otimes \left(\rho_j^B-\frac{\id}{d_B}\right)(2P_A\otimes P_B-\id)} \nonumber \\
      &= \alpha_{d_A}\alpha_{d_B}\tr{\left(\rho_i^A-\frac{\id}{d_A}\right)P_A}\tr{ \left(\rho_j^B-\frac{\id}{d_B}\right)P_B} \nonumber \\
      &= A_i B_j,
  \end{align}
  where $A_i = \alpha_{d_A}\tr{(\rho_i^A-\id/d_A)P_A}$ and $B_j = \alpha_{d_B}\tr{ (\rho_j^B-\id/d_B)P_B}$.
  From Lemma \ref{lem:ineq-dual}, one can see that $|A_i|\leq 1$ and $|B_i|\leq 1$ hold.
  Thus we have
  \begin{align}
    |D|
    \leq& |A_0B_0+A_0B_1+A_1B_0-A_1B_1| \nonumber \\
    \leq& |A_0||B_0+B_1|+|A_1||B_0-B_1| \nonumber \\
    \leq& |B_0+B_1|+|B_0-B_1| \nonumber \\
    \leq& 2.
  \end{align}
  This can be regarded as the dual version of the Bell--CHSH inequality for a product effect.
  For a convex combination of a product effect, we have a similar inequality because of the linearity.
\end{proof}
This theorem holds for the separable effect $\obM_1\in E(\Hil_A)\otimes_{\min}E(\Hil_B)$.
If an entangled effect exists satisfying the conditions stated in Theorem~\ref{thm:gen-dual-bell-chsh},
the effect may violate this inequality.

\section{Experiment}\label{sec:experiment}
To show that the violation of our inequality can be demonstrated experimentally,
we used a cloud quantum computer, the IBM Quantum Experience, which is open to the public.
Because the quantum computer uses a qubit ($\Hil = \complex^2$) as its unit,
we executed our experiment for the two-qubit case.
Figure~\ref{fig:quantum-circuit} shows the situation for our experiment.
\begin{figure}
  \centering
  \includegraphics[width=0.6\linewidth]{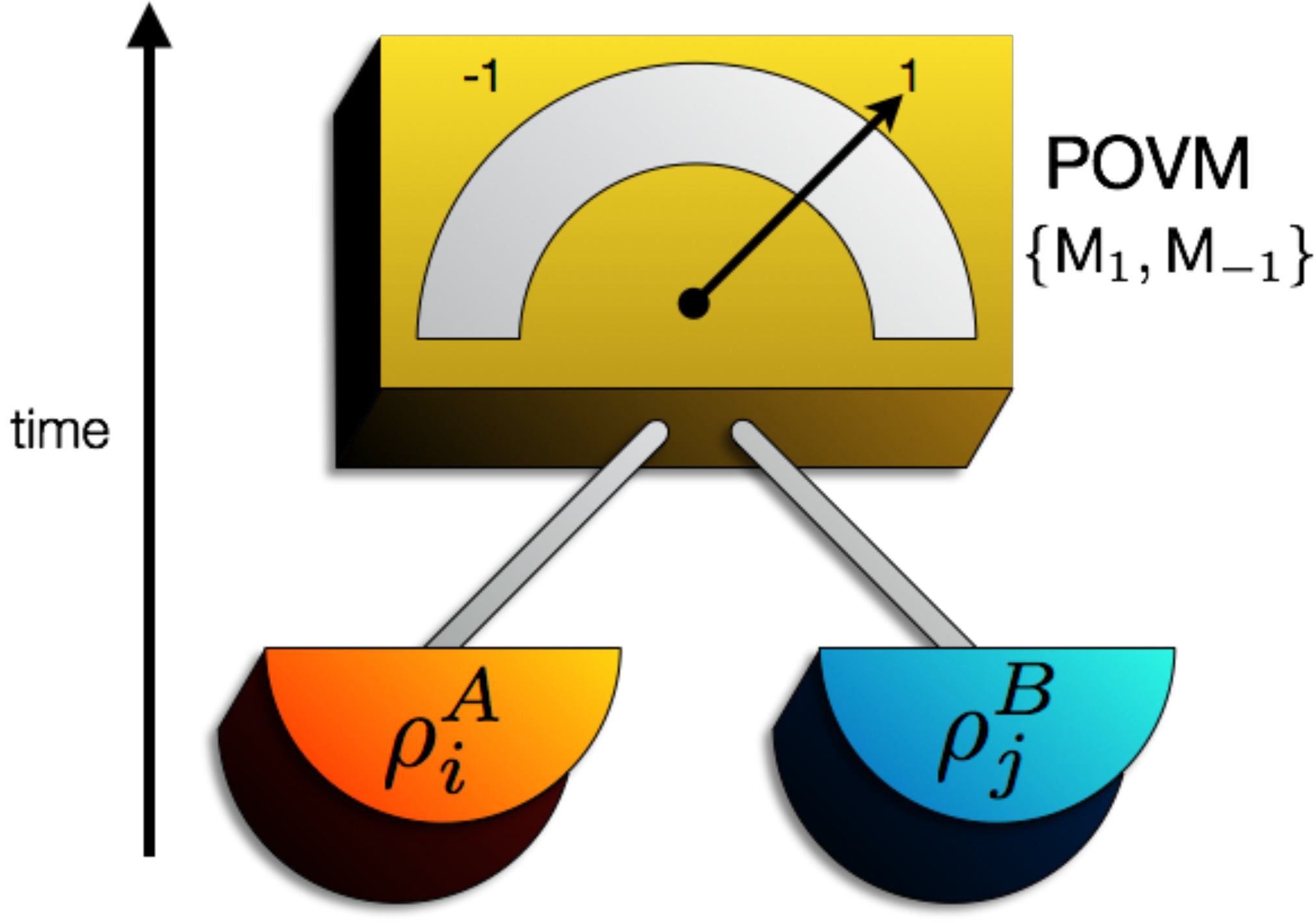}
  \caption{
    Our experiment consists of preparation of two states $\rho^A_i$ and $\rho^B_j$
    and one nonlocal observable $\obM$.
    The preparation of mixed states is provided by a probabilistic mixture of the preparation of a pure state.
    We perform a Bell measurement as a measurement of the nonlocal observable
    in the experiment.
  }
\label{fig:quantum-circuit}
\end{figure}
We perform a Bell measurement for the complete mixed state $\frac{\id_4}{4}$ (Figure~\ref{fig:compmixed-bellmeas}).
\begin{figure}[t]
  \centering
  \includegraphics[width=0.8\linewidth]{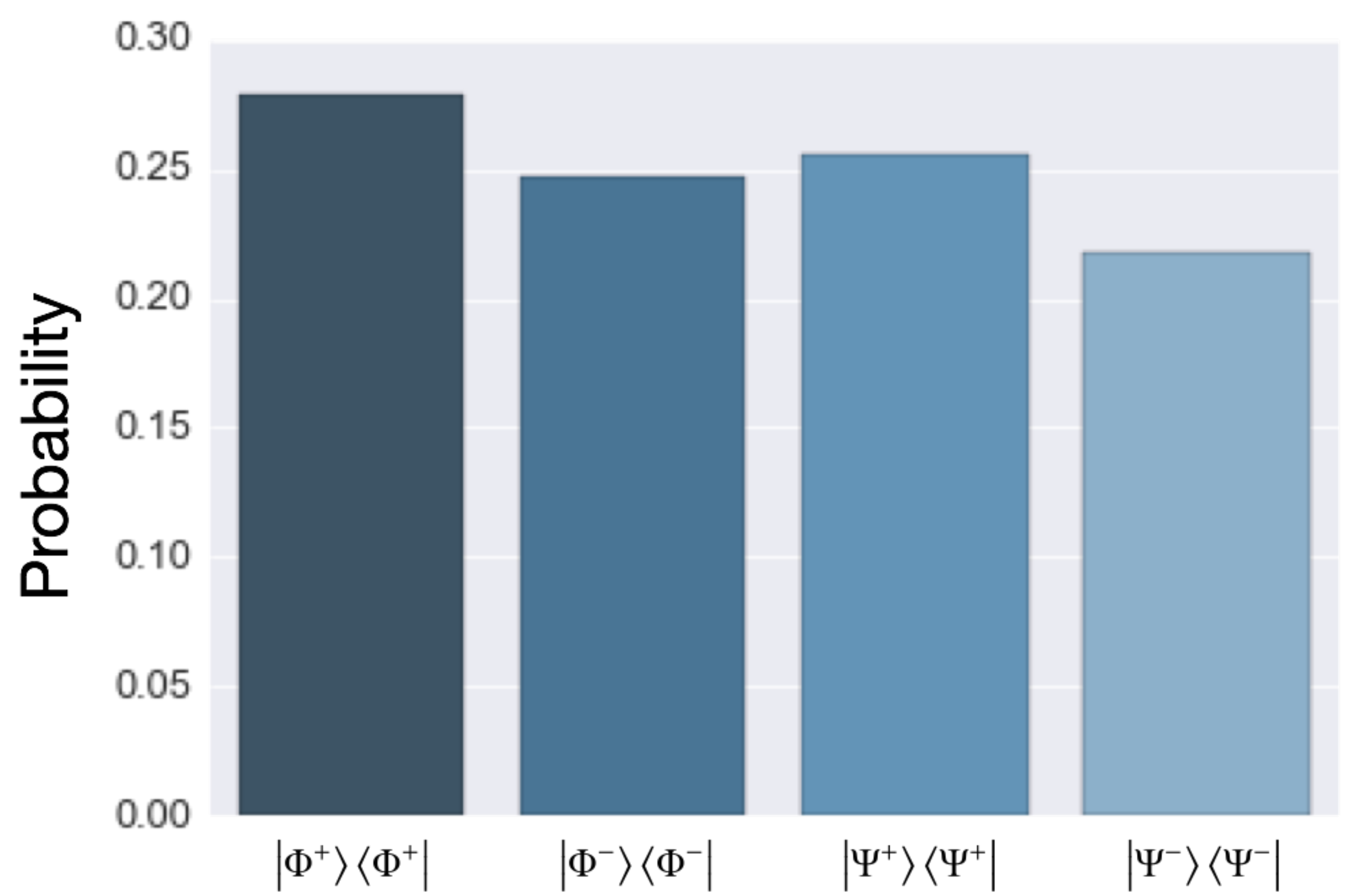}
  \caption{%
    This chart shows the probability distribution when a Bell measurement is made on the completely mixed state.
    Each probability is
$\tr{\frac{\id_4}{4}\Ket{\Phi^+}\Bra{\Phi^+}}= 0.2792$,
$\tr{\frac{\id_4}{4}\Ket{\Phi^-}\Bra{\Phi^-}}= 0.2474$,
$\tr{\frac{\id_4}{4}\Ket{\Psi^+}\Bra{\Psi^+}}= 0.2552$,
$\tr{\frac{\id_4}{4}\Ket{\Psi^-}\Bra{\Psi^-}}= 0.2182$.}
  \label{fig:compmixed-bellmeas}
\end{figure}
Because our theorem requires the condition $\tr{\frac{\id}{4} M_x} \leq \frac{1}{4}$,
we use the value $D$ for $\obM_1=\Ket{\Phi^-}\Bra{\Phi^-}$.
We prepare the states as
$\rho_0^A=\ket{0}\bra{0}$,
$\rho_1^A=\ket{+}\bra{+}$,
$\rho_0^B=\frac{2+\sqrt{2}}{4}\ket{0}\bra{0}-\frac{\sqrt{2}}{4}\ket{0}\bra{1}-\frac{\sqrt{2}}{4}\ket{1}\bra{0}+\frac{2-\sqrt{2}}{4}\ket{1}\bra{1}$, and
$\rho_1^B=\frac{2+\sqrt{2}}{4}\ket{0}\bra{0}+\frac{\sqrt{2}}{4}\ket{0}\bra{1}+\frac{\sqrt{2}}{4}\ket{1}\bra{0}+\frac{2-\sqrt{2}}{4}\ket{1}\bra{1}$.
In this setting, we obtain
\begin{equation}
D = 2.573 \pm 0.035.
\end{equation}
which shows the violation of the dual Bell--CHSH inequality.
Therefore the effect of a Bell measurement is confirmed to be an entangled effect.
This result rules out the maximal tensor product state space.

In the above experimental setting, the value $D$ theoretically takes $2\sqrt{2}$.
Our experiment does not reach this value because the quantum computer is noisy and has an error.
Moreover, this value of $2\sqrt{2}$ is an upper bound for quantum theory and can be regarded as a dual version of the Tsirelson bound~\cite{Cirelson1980}.
This follows from evaluating the square root of an operator norm of ${\left(\sum_{ij}{(-1)}^{ij}\alpha_{d_A}\alpha_{d_B}\left(\rho_i^A-\frac{\id}{d_A}\right)\otimes\left(\rho_j^B-\frac{\id}{d_B}\right)\right)}^2$.

\section{Dual Bell--CHSH inequality and quantum teleportation}\label{sec:qtelepo}
We now discuss the relation between the dual version of the Bell--CHSH inequality and quantum teleportation~\cite{Bennett1993} in qubit systems.
Quantum teleportation is a protocol for transporting an unknown state between two parties (called Alice and Bob) by using a pre-shared entangled state and local operations and classical communication (LOCC).
While not all entangled states are useful for quantum teleportation,
all qubit states that violate the Bell--CHSH inequality are useful for quantum teleportation~\cite{Horodecki1996a}.
Here we consider the case where the entanglement of two parties is maximal but Alice's Bell measurement is incomplete.
Let $\Set{\obA_i}_{i=0,1,2,3}$ be a POVM of Alice's entangled measurement, where each effect is represented as $\obA_i=\frac{1}{4}(\id\otimes\id +
\vec{r}_i\cdot\vec{\sigma}\otimes\id + \vec{s}_i\cdot\vec{\sigma}+\sum_{n,m=1}^{3}t_i^{nm}\sigma_n\otimes\sigma_m)$.
Alice prepares an unknown pure state $\phi$ and then performs a measurement of $\obA$ for her unknown state and one of the entangled state.
The probability that Alice obtains an outcome $i$ is $p_i=\tr{(\ket{\phi}\bra{\phi}\otimes \ket{\Phi^+}\bra{\Phi^+}) (\obA_i\otimes \id)}$.
When $i$ occurs, Bob performs a feedback unitary operation $U_i$.
The output state $\rho_i$ is $\frac{1}{p_i}\ptr{A}{(\id\otimes U_i)(\ket{\phi}\bra{\phi}\otimes \ket{\Phi^+}\bra{\Phi^+}) (\obA_i\otimes \id)(\id\otimes U_i^*)}$.
We define the average fidelity as
\begin{equation}
  \mathcal{F} = \int_S \mathrm{d}M(\phi) \sum_{i=0}^{3} p_i \Braket{\phi|\rho_i\phi},
\end{equation}
where $M$ is a uniform measure on the Bloch sphere.
From the duality and the calculation of Ref.~\cite{Horodecki1996a}, one can show that the maximum value of the average fidelity $\mathcal{F}_{\max}$ is
\begin{equation}
  \mathcal{F}_{\max}\coloneqq \max_{\set{U_i}_i} \mathcal{F}=\frac{1}{2}\left( 1 + \frac{1}{12}\sum_{i=0}^{3}\tr{\sqrt{T_i^*T_i}} \right),
\end{equation}
where $T_i$ is matrix whose matrix elements are $t_i^{nm}=\tr{\sigma_n\otimes\sigma_m\obA_i}$.
If $\mathcal{F}_{\max}$ is greater than $\frac{2}{3}$ then the observable $\obA$ is
useful for quantum teleportation.
Therefore, to be useful for quantum teleportation, an observable $\obA$ needs to satisfy
$\sum_i \tr{\sqrt{T_i^*T_i}} > 4$.
The following corollary is convenient to discuss quantum teleportation and entanglement of an observable.
This corollary seems to be the dual version of Theorem~1 in Ref.~\cite{Horodecki1995}
\begin{cor}
  For the Hilbert space $\complex^2\otimes\complex^2$,
  let an effect $\obM_1$ be $\frac{1}{4}(\id\otimes\id +
  \vec{r}\cdot\vec{\sigma}\otimes\id + \vec{s}\cdot\vec{\sigma}+\sum_{n,m=1}^{3}t^{nm}\sigma_n\otimes\sigma_m)$.
  Let $T$ be a matrix whose matrix elements are $t^{nm}=\tr{\sigma_n\otimes\sigma_m\obM_1}$.
  Inequality~\eqref{eq:dual-bell} holds for the effect $\obM_1$ and some states $\rho_i^A$ and $\rho_j^B$
  if and only if $\lambda_1(T^*T)+\lambda_2(T^*T)\leq1$
  where $\lambda_k(A)$ is the $k$th largest eigenvalue of $A$.
\end{cor}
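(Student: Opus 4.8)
The plan is to specialise everything to qubits ($d_A=d_B=2$, hence $\alpha_2=2$) and reduce $D$ to the bilinear Bell functional analysed by Horodecki--Horodecki--Horodecki. First I would parametrise the four states by their Bloch vectors, $\rho_i^A=\frac12(\id+\vec a_i\cdot\vec\sigma)$ and $\rho_j^B=\frac12(\id+\vec b_j\cdot\vec\sigma)$ with $|\vec a_i|,|\vec b_j|\le1$, so that $\rho_i^A-\id/2=\frac12\,\vec a_i\cdot\vec\sigma$ and similarly for $B$. Writing $\obM=2\obM_1-\id$ and expanding $\obM_1$ in the Pauli basis, the terms $\id\otimes\id$, $\vec r\cdot\vec\sigma\otimes\id$ and $\id\otimes\vec s\cdot\vec\sigma$ are annihilated when traced against the purely bilocal operator $(\rho_i^A-\id/2)\otimes(\rho_j^B-\id/2)$, by orthogonality of the Pauli matrices under the Hilbert--Schmidt product. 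A short computation, in which the prefactors ($\alpha_2^2/2=2$, the $1/4$ from the two Bloch projections, the coefficient $1/2$ carried by the bilocal part of $2\obM_1$, and the $4$ from $\tr{(\sigma_n\otimes\sigma_m)(\sigma_{n'}\otimes\sigma_{m'})}=4\delta_{nn'}\delta_{mm'}$) collapse to unity, then gives $E(\rho_i^A,\rho_j^B,\obM)=\vec a_i^{\,T}T\vec b_j$. Note that $T$ is real, since $\obM_1$ and the $\sigma_n\otimes\sigma_m$ are Hermitian, so $T^*T=T^{\,T}T$.

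With this reduction $D=\vec a_0^{\,T}T(\vec b_0+\vec b_1)+\vec a_1^{\,T}T(\vec b_0-\vec b_1)$, which is precisely the CHSH correlation functional with the roles of states and measurement directions interchanged: the Bloch vectors of the states now play the part of the settings, while $T$, the correlation matrix of the effect, plays the part of the state correlation matrix. I would therefore maximise $|D|$ over the Bloch balls exactly as in Ref.~\cite{Horodecki1995}. Maximising first over $\vec a_0,\vec a_1$ with norm $\le1$ replaces the inner products by $|T(\vec b_0+\vec b_1)|+|T(\vec b_0-\vec b_1)|$; being a sum of norms of affine functions this is convex in each $\vec b_j$, so the optimum is attained at unit Bloch vectors, i.e.\ at pure states. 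Setting $\vec b_0+\vec b_1=2\cos\theta\,\hat c$ and $\vec b_0-\vec b_1=2\sin\theta\,\hat d$ with $\hat c\perp\hat d$ orthonormal yields $2\cos\theta\,|T\hat c|+2\sin\theta\,|T\hat d|\le 2\sqrt{|T\hat c|^2+|T\hat d|^2}$.

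The crux is the final optimisation of $|T\hat c|^2+|T\hat d|^2=\hat c^{\,T}T^{\,T}T\hat c+\hat d^{\,T}T^{\,T}T\hat d$ over orthonormal pairs $\hat c,\hat d$. By the Ky Fan / Courant--Fischer variational characterisation of partial eigenvalue sums, this supremum equals $\lambda_1(T^{\,T}T)+\lambda_2(T^{\,T}T)$, attained by choosing $\hat c,\hat d$ along eigenvectors for the two largest eigenvalues. Hence $\max|D|=2\sqrt{\lambda_1(T^*T)+\lambda_2(T^*T)}$, so the inequality $-2\le D\le2$ holds for every choice of states precisely when $2\sqrt{\lambda_1+\lambda_2}\le2$, that is $\lambda_1(T^*T)+\lambda_2(T^*T)\le1$, and is violated by some choice of states exactly in the complementary case; this is the claimed equivalence. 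I expect the main obstacle to be this orthonormal-pair eigenvalue optimisation together with the book-keeping that confirms the optimum sits at pure states; the Pauli-orthogonality reduction in the first step is routine but must be carried out carefully so that the local Bloch data $\vec r,\vec s$ genuinely drop out and only $T$ survives.
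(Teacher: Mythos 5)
Your proposal is correct and follows essentially the same route as the paper: reduce $E(\rho_i^A,\rho_j^B,\obM)$ to the bilinear form $\vec a_i^{\,T}T\vec b_j$ via the Bloch/Pauli expansion (your prefactor bookkeeping checks out, the constants do collapse to unity) and then invoke the Horodecki-style maximisation giving $\max|D|=2\sqrt{\lambda_1(T^*T)+\lambda_2(T^*T)}$. The only difference is that you spell out the orthonormal-pair optimisation and the Ky Fan step that the paper leaves implicit with a citation to Ref.~\cite{Horodecki1995}.
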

\begin{proof}
  We assume that $\rho_i^A=\frac{1}{2}(\vec{a}_i\cdot\vec{\sigma}+\id)$, $\rho_j^B=\frac{1}{2}(\vec{b}_j\cdot\vec{\sigma}+\id)$.
  By a simple calculation, one has $E(\rho_i^A,\rho_j^B,\obM)=\braket{\vec{a}_i|T \vec{b}_j}$.
  Therefore $D=\braket{\vec{a}_0|T(\vec{b}_0+\vec{b}_1)}+\braket{\vec{a}_1|T(\vec{b}_0-\vec{b}_1)}$.
  The maximum value of $D$ is $2\sqrt{\lambda_1(T^*T)+\lambda_2(T^*T)}$.
  $2\sqrt{\lambda_1(T^*T)+\lambda_2(T^*T)}\leq 2$ if and only if $\lambda_1(T^*T)+\lambda_2(T^*T)\leq 1$.
\end{proof}

If these entangled effects $\obA_i$ violate the dual version of the Bell--CHSH inequality,
one has $\tr{\sqrt{T_i^*T_i}} > 1$ from Corollary~1.
Thus, the entangled observable that violates the dual Bell--CHSH inequality is useful for quantum teleportation.

\section{Discussion and conclusion}\label{sec:conclusion}
We introduced an entangled effect and the dual version of the Bell--CHSH inequality, which is satisfied by separable effects in a certain class of effects.
Note that some entangled effects cannot be detected by our inequality.
For example, for sufficiently small $\epsilon$, $\epsilon (\Ket{00}+\Ket{11})(\Bra{00}+\Bra{11})$ is an entangled effect but satisfies our inequality.
To develop further the method to be applicable to a wider class of effects is an important future problem.
Conversely, the present result is strong enough to rule out the maximal state space.
We show that the maximal state space for a pair of quantum systems is unphysical
because the effect space dual to the maximal state space violates the dual version of the Bell--CHSH inequality.
However, our theory is not strong enough to single out the composite system.
We hope to study this important issue in future work.

An effect that can be realized by LOCC is a separable effect.
However, it remains unclear whether all separable effects can be realized by a LOCC operation.
LOCC operations preserve the separability of effects because LOCC operations are separable operations.
Giving a completely operational meaning to separable effects is an open problem.

In this Letter, we assume that the local state space is quantum.
It is natural to consider the case in which this assumption is removed and an extension is made to generalized probabilistic theories (GPTs)~\cite{Janotta2014}.
Our inequality requires the completely mixed state and the dimension of the Hilbert space.
For some special GPTs, the corresponding values can be defined.
For example, gbit (square-shaped state space) has the center of gravity, which corresponds to the completely mixed state.
The corresponding the dimension of the Hilbert space is two.
Therefore, our inequality can be extended to gbit.
The tensor product of gbit is not unique.
The most famous example is the PR box, which is the maximal tensor product state space of two gbits.
However, the PR box does not have the entangled effect, so the PR box does not violate our inequality.
Conversely, the minimal tensor product state space of two gbits has an entangled effect, which violates our inequality and has Tsirelson-like bound of four.
To extend to the case of more general GPTs, we need to find a new inequality that does not depend on the completely mixed state and the dimension of Hilbert space.

The high performance of quantum information processing is due to entangled states.
In addition, we show that for the first time that entangled observables whose effects violate the dual version of the Bell--CHSH inequality is useful for quantum teleportation.
We expect that entangled effects are needed to support the richness of state transformations
and enhance the capability of quantum information processing.

\section*{Acknowledgements}

The author thanks T. Miyadera, Y. Kuramochi, and M. Kobayashi for fruitful discussions.
We acknowledge use of the IBM Quantum Experience for this work.
The views expressed are those of the authors and do not reflect the official policy or position of IBM or the IBM Quantum Experience team.
The author acknowledges support by Grant-in-Aid for JSPS Research Fellow (JP18J10310).

\bibliographystyle{elsarticle-num}
\bibliography{main.bib}

\end{document}